\newcommand{\T}{{\scriptscriptstyle\mathsf{T}}}
\renewcommand{\H}{{\scriptscriptstyle\mathsf{H}}}
\newsavebox{\foobox}
\definecolor{kugray5}{RGB}{224,224,224}
\newcommand\rsout{\bgroup\markoverwith
	{\textcolor{red}{\rule[0.5ex]{2pt}{0.8pt}}}\ULon}
\newcommand{\ALOOP}[1]{\ALC@it\algorithmicloop\ #1%
	\begin{ALC@loop}}
	\newcommand{\ENDALOOP}{\end{ALC@loop}\ALC@it\algorithmicendloop}
\let\mybibitem\bibitem
\renewcommand{\bibitem}[1]{%
	\ifstrequal{#1}{nature}
	{\color{blue}\mybibitem{#1}}
	{\color{black}\mybibitem{#1}}%
}
\newtheorem{theorem}{\textbf{Theorem}}
\newtheorem{remark}{\textbf{Remark}}
\newtheorem{proof}{Proof}
\newcommand{\epr}{\hfill\(\Box\)}
\newcommand\nbthis{\addtocounter{equation}{1}\tag{\theequation}}
\newcommand{\norm}[1]{\left\lVert#1\right\rVert} 
\newcommand{\normshort}[1]{\lVert#1\rVert} 
\newcommand{\abs}[1]{\left|#1\right|} 
\newcommand{\absshort}[1]{|#1|} 
\newcommand{\tr}[1]{\text{trace}\left(#1\right)} 
\newcommand{\trshort}[1]{\text{trace}(#1)} 
\newcommand{\diag}[1]{\mathtt{diag}\left\{#1\right\}} 
\newcommand{\re}[1]{\mathfrak{R}{\left(#1\right)}}
\newcommand{\im}[1]{\mathfrak{I}{\left(#1\right)}}
\newcommand{\mean}[1]{\mathbb{E} \left\{#1\right\}}
\newcommand{\meanshort}[1]{\mathbb{E} \{#1\}}
\newcommand{\mR}{{\mathbf{R}}}
\newcommand{\mH}{{\mathbf{H}}} 
\newcommand{\mA}{{\mathbf{A}}}
\newcommand{\mS}{{\mathbf{S}}}
\newcommand{\mI}{\textbf{\textbf{I}}}
\newcommand{\mB}{{\mathbf{B}}}
\newcommand{\mD}{{\mathbf{D}}}
\newcommand{\mX}{{\mathbf{X}}}
\newcommand{\mY}{{\mathbf{Y}}}
\newcommand{\mG}{{\mathbf{G}}}
\newcommand{\mF}{{\mathbf{F}}}
\newcommand{\mZ}{{\mathbf{Z}}}
\newcommand{\mN}{{\mathbf{N}}}
\newcommand{\mJ}{{\mathbf{J}}}
\newcommand{\setC}{\mathbb{C}} 
\newcommand{\setR}{\mathbb{R}}
\newcommand{\ve}{{\mathbf{e}}} 
\newcommand{\vs}{{\mathbf{s}}}
\newcommand{\vx}{{\mathbf{x}}}
\newcommand{\vy}{{\mathbf{y}}}
\newcommand{\vv}{{\mathbf{v}}}
\newcommand{\vn}{{\mathbf{n}}}
\newcommand{\vz}{{\mathbf{z}}} 
\newcommand{\vh}{{\mathbf{h}}}
\newcommand{\vb}{{\mathbf{b}}}
\newcommand{\vw}{{\mathbf{w}}}
\newcommand{\va}{{\mathbf{a}}}
\newcommand{\vf}{\mathbf{f}}
\def\b0{{\pmb{0}}}
\newcommand{\Nr}{N_\mathtt{r}}
\newcommand{\Nt}{N_\mathtt{t}}
\newcommand{\sigmac}{\sigma_{\mathtt{c}}^2}
\newcommand{\sigmas}{\sigma_{\mathtt{s}}^2}
\newcommand{\noise}{\sigma^2}
\newcommand{\Pt}{P_{\mathtt{t}}}
\begin{document}
	\title{Multi-Static Cell-Free Massive MIMO ISAC: Performance Analysis and Power Allocation}
	\author{\IEEEauthorblockN{Nhan~Thanh~Nguyen\IEEEauthorrefmark{1}, Tianyu~Fang\IEEEauthorrefmark{1}, Hien~Quoc~Ngo\IEEEauthorrefmark{2}, and Markku Juntti\IEEEauthorrefmark{1}}
		\IEEEauthorblockA{\IEEEauthorrefmark{1}Centre for Wireless Communications, University of Oulu, P.O.Box 4500, FI-90014, Finland}
		\IEEEauthorblockA{\IEEEauthorrefmark{2}School of Electronics, Electrical Engineering and Computer Science, Queen’s University Belfast, UK}
		Emails: \{nhan.nguyen, tianyu.fang, markku.juntti\}@oulu.fi; hien.ngo@qub.ac.uk}
	\maketitle
	
	\begin{abstract}
		In this work, we consider a cell-free massive multiple-input multiple-output (MIMO) integarted sensing and communications (ISAC) system with maximum-ratio transmission schemes combined with multistatic radar-type sensing. Our focus lies on deriving closed-form expressions for the achievable communications rate and the Cramér-Rao lower bound (CRLB), which serve as performance metrics for communications and sensing operations, respectively. The expressions enable us to investigate important operational characteristics of multistatic cell-free massive MIMO-ISAC, including the mutual effects of communications and sensing as well as the advantages stemming from using numerous distributed antenna arrays for each functionality. Furthermore, we optimize the power allocation among the access points to maximize the communications rate while guaranteeing the CRLB constraints and total transmit power budget. Extensive numerical results are presented to validate our theoretical findings and demonstrate the efficiency of the proposed power allocation approach.
	\end{abstract}
	
	\begin{IEEEkeywords}
		Multi-static, integrated sensing and communications (ISAC), massive MIMO, maximum-ratio transmission.
	\end{IEEEkeywords}
	\IEEEpeerreviewmaketitle
	
	\section{Introduction}
	Dual-functional beamforming and waveform designs have been a recent focus in the literature, aiming to enhance resource efficiency in integrated sensing and communications (ISAC) systems \cite{liu2018mu}. While existing designs  mainly investigated mono-static ISAC setups \cite{liu2020joint, johnston2022mimo, liu2022joint, nguyen2023joint, krishnananthalingam2024constant, nguyen2024massive, hatami2024waveform, nguyen2023multiuser, nguyen2024joint}, recent studies reveal that cell-free massive MIMO (CFmMIMO) holds promise for enabling multi-static ISAC operations with improved communications and sensing performance.
	
    In \cite{elfiatoure2024multiple}, Elfiatoure \textit{et al.} derived closed-form expressions for the communication spectral efficiency (SE) and the main lobe-to-average-side lobe ratio (MASR) in the sensing zones of a CFmMIMO ISAC system. They further introduced a joint mode selection and power control design to maximize SE fairness while maintaining MASR requirements.  Rivetti \textit{et al.} \cite{rivetti2024secure} addressed security concerns in a CFmMIMO ISAC network, where the target acts as an eavesdropper, by proposing a centralized ISAC waveform that introduces artificial noise to disrupt the eavesdropper’s channel. The optimal precoding and artificial noise covariance matrices were derived via a semi-definite relaxation approach to minimize a constrained Cramer-Rao lower bound (CRLB). Behdad \textit{et al.} \cite{behdad2024multi} developed a maximum a posteriori ratio test detector for target detection and a power allocation algorithm to maximize the sensing signal-to-interference-plus-noise ratio (SINR) while ensuring minimum communications SINR.  Similar CF mMIMO ISAC designs were considered in \cite{demirhan2023cell} and \cite{demirhan2023cell_asilomar} but with the focus on maximizing sensing signal-to-noise ratios (SNRs). In \cite{mao2024communication}, the beamforming designs was proposed to maximized the communications sum rate subject to the sensing beampattern accuracy. Graph neural networks are leveraged for CF mMIMO ISAC beamforming designs in \cite{demirhan2024learning} and \cite{adhikary2024holographic}. Whereas, Zeng \textit{et al.} \cite{zeng2024multi} explored a multi-static ISAC approach utilizing network-assisted full-duplex CF networks, developing a deep Q-network-based power control method to balance communications and sensing in the uplink and downlink of access points.
	

	In this work, we consider a CF mMIMO-ISAC system, wherein APs jointly serve multiple user equipments (UEs) and a sensing target. The MRT method is employed for precoding at distributed APs, while the transmit power associated with the APs are optimized at the CPU. To investigate the operational characteristic of multistatic CF mMIMO-ISAC, we first deriving closed-form expressions for the achievable communications rate and the CRLB, which serve as performance metrics for communications and sensing operations, respectively. Based on that, we provide useful insights on the impact of sensing on communications as well as the benefits from having multiple distributed APs. Furthermore, we optimize the power allocation among the APs to maximize the communications rate subject to the constraints on the minimum CRLB for target's angle estimations and total transmit power budget. Our extensive numerical results validate our theoretical analyses and demonstrate the efficiency of the proposed power allocation approach. 
	
	\section{Signal Model}
	\label{sec_system_nodel}

	We consider a multi-static CF massive MIMO ISAC system, including $L$ APs, $K$ single-antenna communications UEs, and a sensed target. Each AP is equipped with $\Nt$ transmit and  $\Nr$ receive antennas. The APs transmit probing signals to the target at a given angle of interest and data signals to the UEs at the same time and frequency.
	
	\subsection{Communications Model}
	
	Denote the transmit vector at time slot $t$ by $\vs_{t} = [s_{1t}, \ldots, s_{Kt}] \in {\mathbb{C}}^{K \times 1}$, where $\mean{\vs_{t} \vs_{t}^{\H}}=\mI_K$ and $s_{k t}$ is the signal intended for the $k$-th UE. Furthermore, let $\mS = [\vs_{1}, \ldots, \vs_{T}] \in \setC^{K \times T}$, where $T \gg 1$ is the length of the radar/communications frame. The data streams are assumed to be independent of each other such that $\mS \mS^\H \approx T \mI_K$~\cite{liu2021cramer}. Let $\mF_{\ell} \in \setC^{\Nt \times K}$ denote the precoding matrix at AP $\ell$. Then, the $\Nt \times 1$ transmit signal vector during time slot $t$ is given as $\vx_{\ell  t} = \mF_{\ell} \vs_{t} = \sum_{k=1}^{K} \vf_{k \ell} s_{k t}$, where $\vf_{k \ell}$ is the $k$th column of $\mF_{\ell}$. The overall dual-functional transmit waveform is given by $\mX_{\ell} \triangleq \mF_{\ell} \mS = [\vx_{\ell 1}, \ldots, \vx_{\ell T}] \in \setC^{\Nt \times T}$.
	
	At time slot $t$, the received signal at UE $k$ is given as
	\begin{align*}
		y_{kt} \!=\! \sum\nolimits_{\ell =1}^{L} \vh_{k \ell}^\H \vf_{k \ell} s_{k t} \!+\!  \sum\nolimits_{\ell=1}^{L} \vh_{k \ell}^\H \sum\nolimits_{j \neq k} \vf_{j \ell} s_{j t} \!+\! n_{kt}. \nbthis \label{eq_y_kl}
	\end{align*}
	where $n_{kt}$ is additive white Gaussian noise (AWGN) following the distribution $\mathcal{CN}(0, \sigmac)$, and $\vh_{k \ell}$ denote the channel between AP $\ell$ and UE $k$, modeled as $\vh_{k \ell} = \beta_{k \ell}^{1/2} \bar{\vh}_{k \ell}$ \cite{mollen2016uplink}, with $\beta_{k \ell}$ and $\bar{\vh}_{k \ell} \sim \mathcal{CN}(0, \mI_{\Nt})$ being the large-scale coefficient and small-scale Rayleigh fading channels, respectively. We assume the  minimum mean square error (MMSE) estimator for channel estimation. Let $\hat{\vh}_{k \ell}$ and $\ve_{k\ell}$ respectively denote the estimate and estimation error of $\vh_{k \ell}$. We have $\hat{\vh}_{k \ell} \sim \mathcal{CN}(\bm{0}, \xi_{k \ell} \mI_{\Nt})$ and $\ve_{k \ell} \sim \mathcal{CN}(\bm{0}, \epsilon_{k \ell} \mI_{\Nt})$, where \cite{ngo2017total}
	\begin{align*}
		\xi_{k \ell} &= \frac{\tau_{\mathtt{p}} p_{\mathtt{p}} \beta_{k \ell}^2}{\tau_{\mathtt{p}} p_{\mathtt{p}} \sum_{j=1}^{K} \beta_j  \absshort{\bm{\psi}_{j \ell}^\H \bm{\psi}_{k \ell}}^2 + \sigmac}, \nbthis \label{eq_xi_k}\\
		\epsilon_{k \ell} = \beta_{k \ell} - \xi_{k \ell} &= \frac{\beta_{k \ell} (\tau_{\mathtt{p}} p_{\mathtt{p}} \sum_{j\neq k} \beta_{j \ell}  \absshort{\bm{\psi}_{j \ell}^\H \bm{\psi}_{k \ell}}^2 + \sigmac )}{\tau_{\mathtt{p}} p_{\mathtt{p}} \sum_{j=1}^{K} \beta_j  \absshort{\bm{\psi}_{j \ell}^\H \bm{\psi}_{k \ell}}^2 + \sigmac}, \nbthis \label{eq_epsilon_k}
	\end{align*} 
	with $\bm{\psi}_{k \ell} \in \setC^{\tau_{\mathtt{p}} \times 1}$  $(\|\bm{\psi}_{k \ell}\|^2=1)$ being the pilot sequence transmitted by UE $k$ to AP $\ell$, $\tau_\mathtt{p}$ is the length of the pilot sequences, and $p_{\mathtt{p}}$ is the average power of the training symbols.
	
	Let $\gamma_{k \ell}$ and $\eta_{k \ell}$ denote the power factor allocated for communications and sensing within the $k$th data stream at AP $\ell$. Denote $\bm{\Gamma}_{\ell} \triangleq \diag{\sqrt{\gamma_{1 \ell}}, \ldots, \sqrt{\gamma_{K \ell}}} \in  \setC^{K \times K}$ and $\bar{\bm{\eta}}_{\ell} \triangleq [\sqrt{\eta_{1 \ell}},\ldots,\sqrt{\eta_{K \ell}}]^\T \in \setC^{K \times 1}$. Furthermore, let $\vv_{\ell} \in \mathbb{C}^{\Nt \times 1}$ be the precoding vector for sensing, and let $\hat{\mH}_{\ell} = [\hat{\vh}_{1 \ell}, \ldots, \hat{\vh}_{K \ell}]$. With the MRT precoder for communications, the precoding matrix is given as $\mF_{\ell} = \hat{\mH}_{\ell} \bm{\Gamma}_{\ell} + \vv_{\ell} \bar{\bm{\eta}}_{\ell}^\T$.

	\subsection{Radar Model}
	Each AP receives the echo signal of its own transmitted signal as well as those from the other APs. Thus, the received discrete-time radar sensing signal at AP $p$ is given as
	\begin{align*}
		\mY^{\mathtt{s}}_{p} &= \sum\nolimits_{\ell =1}^{L} \alpha_{\ell p} \mG_{\ell p}(\theta_{p}, \phi_{\ell}) \mX_{\ell} + \mN^{\mathtt{s}}_{p}, \nbthis \label{eq_radar_nodel}
	\end{align*}
	where $\alpha_{\ell p}$ is the gain of the channel from the transmit AP $\ell$ to receive AP $p$, $\mN^{\mathtt{s}}_{p}$ is an AWGN matrix with entries distributed as $\mathcal{CN}(0, \sigmas)$. Furthermore, $\mG_{\ell p}(\theta_{p}, \phi_{\ell}) \in \setC^{\Nr \times \Nt}$ is the two-way channel in the desired sensing directions, modeled as $\mG_{\ell p}(\theta_{p}, \phi_{\ell}) = \vb_{p}(\theta_{p}) \va_{\ell}^\H(\phi_{\ell})$ \cite{johnston2022mimo, liu2020joint}, 
	where $\theta_{p} (\phi_{\ell}) \in \left[-\frac{\pi}{2}, \frac{\pi}{2}\right]$ are the angle of arrival (departure) (AoA/AoD) to (from)  AP $p (\ell)$. Furthermore, $\va_{\ell}(\phi_{\ell})$ and $\vb_{p}(\theta_{p})$ are the transmit and receive steering vectors, respectively. In the following analysis, we drop $\theta_{p}$ and $\phi_{\ell}$ for ease of exposition. We assume that the APs employ uniform linear arrays (ULA) with half-wavelength antenna spacing. The steering vector $\vb_{p}$ is modeled as $\vb_{p} = \left[ e^{-j \pi \sin(\theta_{p})}, e^{-j2\pi \sin(\theta_{p})}, \ldots, e^{j\Nr \pi \sin(\theta_{p})} \right]^\T$ \cite{liu2021cramer}, and  $\va_{\ell}$ is modeled similarly.

	From \eqref{eq_radar_nodel} and $\mX_{\ell} = \mF_{\ell} \mS$ with $\mF_{\ell} = \hat{\mH}_{\ell} \bm{\Gamma}_{\ell} + \vv_{\ell} \bar{\bm{\eta}}_{\ell}^\T$, we have $\mY^{\mathtt{s}}_{p}   = \alpha_{\ell p} \vb_{p} \va_{\ell}^\H \vv_{\ell} \bar{\bm{\eta}}_{\ell}^\T  \mS + \tilde{\mN}^{\mathtt{s}}_{p}$, 
	where $\tilde{\mN}^{\mathtt{s}}_{p} \triangleq \alpha_{\ell p} \vb_{p} \va_{\ell}^\H \hat{\mH}_{\ell} \bm{\Gamma}_{\ell} \mS + \vn^{\mathtt{s}}_{p}$. It is observed that for an arbitrary $\hat{\mH}_{p} \bm{\Gamma}_{p}$, the sensing beamformer $\vv_{\ell} = \va_{\ell}$ maximizes the received echo signal power at the BS. Thus, $\mF_{\ell}$ can be given as
	\begin{align*}
		\mF_{\ell} = \hat{\mH}_{\ell} \bm{\Gamma}_{\ell} + \va_{\ell} \bar{\bm{\eta}}_{\ell}^\T. \nbthis \label{eq_F}
	\end{align*}
	
	\section{Communications and Sensing Performance}
	\label{sec_perf_analysis}
	In this section, we derive the achievable rate and the CRLB to evaluate the communications and sensing performance. 
	
	\subsection{Communications Performance}
	We rewrite \eqref{eq_y_kl} as
	\begin{align*}
		y_{kt} &= \mathtt{DS}_k s_{kt} + \mathtt{BU}_{k} s_{kt} + \sum\nolimits_{j \neq k} \mathtt{UI}_{kj} s_{jt} + n_{kt}, \nbthis \label{eq_ykl}
	\end{align*}
	where $\mathtt{DS}_k \triangleq \mean{ \sum_{\ell =1}^{L} \vh_{k \ell}^\H \vf_{k \ell}}$, $ \mathtt{BU}_{k} \triangleq \sum_{\ell =1}^{L}  \vh_{k \ell}^\H \vf_{k \ell} - \mean{ \sum_{\ell =1}^{L}  \vh_{k \ell}^\H \vf_{k \ell}}$, and $\mathtt{UI}_{kj} \triangleq \sum_{\ell =1}^{L}  \vh_{k \ell}^\H  \vf_{j \ell}$ represent expectations associated with the desired signal, beamforming uncertainty, and inter-UE interference, respectively. From \eqref{eq_ykl}, the achievable rate of the $k$-th UE is given by $R_{k} = \bar{\tau} \log_2 \left(1 \! +\! {\abs{\mathtt{DS}_k}^2}/{\left(\mean{\abs{\mathtt{BU}_{k}}^2} \!+\!  \sum\nolimits_{j \neq k} \mean{\abs{\mathtt{UI}_{kj}}^2} \!+\! \sigmac\right)} \right)$,
	where $\bar{\tau} \triangleq \left(\tau_{\mathtt{c}} - \tau_{\mathtt{p}}\right)/\tau_{\mathtt{c}}$. Its closed-form expression is presented in the following theorem.
	\begin{theorem}
		\label{theo_SE}
		The achievable rate for UE $k$ is given by
			\begin{align*}
				R_k(\bm{\Psi}) \! = \!\bar{\tau} \log_2 \!\left(\!\! 1\! +\! \frac{\Nt^2 \left(\bm{\xi}_k^\T \bar{\bm{\gamma}}_k\right)^2} {\Nt \sum\nolimits_{j=1}^K \left( \hat{\bm{\beta}}_{kj}^\T \bm{\gamma}_j \!+\! \bm{\beta}_{k}^\T \bm{\eta}_j \!\!\right) \!+\! \sigmac}\!\right)\!, \nbthis \label{eq_SE_theo}
			\end{align*}
			where $\bm{\xi}_k = [\xi_{k1}, \ldots, \xi_{kL}]^\T$, $\bar{\bm{\gamma}}_{k} = [\sqrt{\gamma_{k1}},\ldots,\sqrt{\gamma_{kL}}]^\T$,  $\hat{\bm{\beta}}_{kj} = [\beta_{k1} \xi_{j1}, \ldots, \beta_{kL} \xi_{jL}]^\T$, $\bm{\beta}_{k} = [\beta_{k1}, \ldots, \beta_{kL}]^\T$, and $\bm{\Psi} = \{ \{\gamma_{k \ell}\}_{k=1,\ell=1}^{K,L}, \{\eta_{k \ell}\}_{k=1,\ell=1}^{K,L} \}$.
	\end{theorem}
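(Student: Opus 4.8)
The bound on $R_k$ stated just before the theorem is already in the standard ``use-and-then-forget'' form, so the task reduces to evaluating the three second-order quantities $\abs{\mathtt{DS}_k}^2$, $\mean{\abs{\mathtt{BU}_k}^2}$, and $\mean{\abs{\mathtt{UI}_{kj}}^2}$ ($j\neq k$) in closed form and substituting them. Throughout I would exploit: (i) the precoder \eqref{eq_F}, whose $k$-th column is $\vf_{k\ell}=\sqrt{\gamma_{k\ell}}\,\hat{\vh}_{k\ell}+\sqrt{\eta_{k\ell}}\,\va_\ell$; (ii) the MMSE decomposition $\vh_{k\ell}=\hat{\vh}_{k\ell}+\ve_{k\ell}$ with $\hat{\vh}_{k\ell}\sim\mathcal{CN}(\bm{0},\xi_{k\ell}\mI_{\Nt})$ independent of $\ve_{k\ell}\sim\mathcal{CN}(\bm{0},\epsilon_{k\ell}\mI_{\Nt})$ and $\xi_{k\ell}+\epsilon_{k\ell}=\beta_{k\ell}$; (iii) $\|\va_\ell\|^2=\Nt$ for a ULA steering vector; and (iv) mutual independence of all channel quantities across the APs $\ell$.

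\emph{Desired-signal term.} Because $\va_\ell$ is deterministic and $\vh_{k\ell}$ is zero-mean, $\mean{\vh_{k\ell}^\H\va_\ell}=0$, while $\mean{\vh_{k\ell}^\H\hat{\vh}_{k\ell}}=\mean{\|\hat{\vh}_{k\ell}\|^2}=\Nt\xi_{k\ell}$ by orthogonality of the MMSE estimate and its error. Hence $\mathtt{DS}_k=\Nt\sum_\ell\sqrt{\gamma_{k\ell}}\,\xi_{k\ell}=\Nt\,\bm{\xi}_k^\T\bar{\bm{\gamma}}_k$, which gives the numerator $\Nt^2(\bm{\xi}_k^\T\bar{\bm{\gamma}}_k)^2$; note that the coherent $\Nt^2$ array gain arises from this term only.

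\emph{Beamforming-uncertainty and interference terms.} Since the per-AP contributions to $\mathtt{BU}_k$ and to $\mathtt{UI}_{kj}$ have zero mean and are independent across $\ell$, both $\mean{\abs{\mathtt{BU}_k}^2}$ and $\mean{\abs{\mathtt{UI}_{kj}}^2}$ equal the sum over $\ell$ of the per-AP variances. For the $j=k$ part I would write $\vh_{k\ell}^\H\hat{\vh}_{k\ell}=\|\hat{\vh}_{k\ell}\|^2+\ve_{k\ell}^\H\hat{\vh}_{k\ell}$ and use the fourth-moment identity $\mean{\|\hat{\vh}_{k\ell}\|^4}=\Nt(\Nt+1)\xi_{k\ell}^2$, the conditionally-Gaussian evaluation $\mean{\abs{\ve_{k\ell}^\H\hat{\vh}_{k\ell}}^2}=\Nt\xi_{k\ell}\epsilon_{k\ell}$, and $\mean{\abs{\vh_{k\ell}^\H\va_\ell}^2}=\Nt\beta_{k\ell}$; all cross terms vanish because the relevant odd Gaussian moments are zero and $\ve_{k\ell}$, $\hat{\vh}_{k\ell}$ are independent. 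Collecting these and invoking $\xi_{k\ell}(\xi_{k\ell}+\epsilon_{k\ell})=\xi_{k\ell}\beta_{k\ell}$ yields $\mean{\abs{\mathtt{BU}_k}^2}=\Nt\sum_\ell(\beta_{k\ell}\xi_{k\ell}\gamma_{k\ell}+\beta_{k\ell}\eta_{k\ell})=\Nt(\hat{\bm{\beta}}_{kk}^\T\bm{\gamma}_k+\bm{\beta}_k^\T\bm{\eta}_k)$. The $j\neq k$ part is computed the same way, with $\mean{\abs{\vh_{k\ell}^\H\hat{\vh}_{j\ell}}^2}=\Nt\beta_{k\ell}\xi_{j\ell}$ and again $\mean{\abs{\vh_{k\ell}^\H\va_\ell}^2}=\Nt\beta_{k\ell}$, giving $\mean{\abs{\mathtt{UI}_{kj}}^2}=\Nt(\hat{\bm{\beta}}_{kj}^\T\bm{\gamma}_j+\bm{\beta}_k^\T\bm{\eta}_j)$. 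Adding $\mean{\abs{\mathtt{BU}_k}^2}+\sum_{j\neq k}\mean{\abs{\mathtt{UI}_{kj}}^2}+\sigmac$ merges the $j=k$ and $j\neq k$ pieces into $\Nt\sum_{j=1}^{K}(\hat{\bm{\beta}}_{kj}^\T\bm{\gamma}_j+\bm{\beta}_k^\T\bm{\eta}_j)+\sigmac$; substituting the numerator and this denominator into the pre-theorem rate expression gives \eqref{eq_SE_theo}.

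\emph{Main obstacle.} The most delicate step is the per-AP second-moment bookkeeping, specifically verifying that every cross term vanishes: those between the MRT part $\sqrt{\gamma_{k\ell}}\hat{\vh}_{k\ell}$ and the sensing part $\sqrt{\eta_{k\ell}}\va_\ell$ at the same AP, and those between distinct APs, where the zero-mean, circularly-symmetric fading and cross-AP independence are essential. The other point requiring care is the statistical relationship between $\vh_{k\ell}$ and the estimate $\hat{\vh}_{j\ell}$ of a different UE: one must use that, under the pilot assignment assumed here, $\hat{\vh}_{j\ell}$ is independent of $\vh_{k\ell}$ for $j\neq k$, so that only the non-coherent $\Nt$-order term $\Nt\beta_{k\ell}\xi_{j\ell}$ survives and no $\Nt^2$ coherent-interference term appears. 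Once this $\Nt^2$-versus-$\Nt$ separation is pinned down, the remaining manipulations are routine and close up cleanly via $\xi_{k\ell}+\epsilon_{k\ell}=\beta_{k\ell}$ and the vector notation of the theorem. \epr
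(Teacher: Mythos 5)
Your proposal is correct and follows essentially the same route as the paper's proof in Appendix~\ref{appd_SE}: the same decomposition into $\abs{\mathtt{DS}_k}^2$, $\mean{\abs{\mathtt{BU}_k}^2}$, and $\mean{\abs{\mathtt{UI}_{kj}}^2}$, the same per-AP variance bookkeeping using the MMSE orthogonality, the fourth-moment identity $\mean{\normshort{\hat{\vh}_{k\ell}}^4}=\Nt(\Nt+1)\xi_{k\ell}^2$, and the closing step $\xi_{k\ell}+\epsilon_{k\ell}=\beta_{k\ell}$. The intermediate quantities you obtain match the paper's \eqref{eq_abs_DSk2}, \eqref{eq_nean_abs_BUk21}, and \eqref{eq_nean_abs_UIk2} exactly, including the implicit independence of $\hat{\vh}_{j\ell}$ and $\vh_{k\ell}$ for $j\neq k$ that the paper also relies on.
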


	\begin{proof}
		See Appendix \ref{appd_SE}. \epr
	\end{proof}
	
	
	
	\begin{remark}
		\label{rm_effect_of_sensing}
		It is observed from \eqref{eq_SE_theo} that, due to the term $\Nt \sum_{j=1}^K \bm{\beta}_{k}^\T \bm{\eta}_j$ in the denominator of the SINR, sensing introduces additional beamforming uncertainty and inter-UE interference, and thus causes performance degradation to the communications subsystem. 
		However, note that the numerator of the SINR term in increases with $L^2 \Nt^2$, while the denominator only increases with $L \Nt$. Therefore, the CF mMIMO ISAC system becomes interference-free as $L \Nt \rightarrow \infty$, similar to conventional CF mMIMO systems without any sensing function. In other words, deploying a very large number of transmit antennas and/or APs can mitigate the impact of sensing on communications performance.
	\end{remark}

	\subsection{Sensing Performance}
	
	For the sensing function, we are interested in characterizing the CRLB for estimating the target angles. Note here that AoD at an AP is equal to the AoA of the echo signals from all APs. This implies that estimating $\{\theta_{p}\}_{p =1}^L$ is sufficient for all the AoAs and AoDs for the APs. Therefore, for low-complexity sensing operation, we assume that each AP estimates its own AoA, i.e., AP $p$ estimates $\theta_{p}$.  To derive the CRLB, we first rewrite \eqref{eq_radar_nodel} as
	\begin{align*}
		\mY^{\mathtt{s}} _{p} = \mX^{\mathtt{s}}_{p} + \mN^{\mathtt{s}}_{p} , \nbthis \label{eq_radar_model_1}
	\end{align*}
	where 
	\begin{align*}
		&\mX^{\mathtt{s}}_{p} \triangleq \sum_{\ell =1}^{L} \alpha_{\ell p}  \vb_{p}  \bar{\va}_{\ell}^\H \mD_{\ell} \mX = \sum_{\ell = 1}^{L} \alpha_{\ell p} \mB_{\ell p} \mX \in \setC^{\Nr \times T}, \nbthis \label{def_Xlp} \\
		&\mX \triangleq [\mX_1; \ldots; \mX_L]  \in \setC^{L\Nt \times T}, \nbthis \label{def_X} \\
		&\mB_{\ell p} \triangleq \vb_{p}  \bar{\va}_{\ell}^\H \mD_{\ell}  \in \setC^{\Nr \times L\Nt}, \nbthis \label{def_Blplp}\\
		&\bar{\va}_{\ell}^\H \triangleq \left[\bm{0}, \ldots, \va_{\ell}^\H, \ldots, \bm{0}\right]  \in \setC^{1 \times L \Nt},  \nbthis \label{def_abarl}\\
		&\mD_{\ell} \triangleq \mathtt{blkdiag} \{ \bm{0}_{\Nt}, \ldots, \mI_{\Nt}, \ldots,   \bm{0}_{\Nt}\} \in \setR^{L\Nt \times L\Nt}.  \nbthis \label{def_Dl}
	\end{align*}
	In \eqref{def_X}, we stack matrices $\{\mX_\ell\}_{\ell=1}^L$; in \eqref{def_abarl}, $\bar{\va}_{\ell}^\H$ includes non-zeros elements with indices in the range $\mathcal{R}_{\ell} \triangleq [(\ell - 1)\Nt, \ell \Nt]$; in \eqref{def_Dl}, $\mD_{\ell}$ is a block-diagonal matrix, and only the $\ell$th block with entries on the rows and columns specified by $\mathcal{R}_{\ell}$ are nonzero.
	Let $\tilde{\vz}^{\mathtt{s}} _{p} = \mathtt{vec}(\mZ^{\mathtt{s}} _{p})$ for $\vz \in \{\vx, \vy, \vn\}$ and $\mZ \in \{\mX, \mY, \mN\}$. Then,   \eqref{eq_radar_model_1} can be vectorized as $\tilde{\vy}^{\mathtt{s}} _{p} = \tilde{\vx}^{\mathtt{s}}_{p} + \tilde{\vn}^{\mathtt{s}}_{p}$. 
	
	At AP $p$, the parameters to estimate are $\bm{\omega}_{p} \triangleq [\theta_{p}, \tilde{\bm{\alpha}}_{p p}]^\T$, where $\tilde{\bm{\alpha}}_{p p} = [\re{{\alpha}_{p p}}, \im{{\alpha}_{p p}}]$.  The Fisher information matrix (FIM) for estimating $\bm{\omega}_{p}$ is given by~\cite{liu2021cramer, song2023intelligent, bekkerman2006target}
	\begin{align*}
		\mJ_{\bm{\omega}_{p}} =
		\begin{bmatrix}
			J_{\theta \theta}^{(p)} & \mJ_{\theta \tilde{\bm{\alpha}}}^{(p)} \\
			\left(\mJ_{\theta \tilde{\bm{\alpha}}}^{(p)}\right)^\T & \mJ_{\tilde{\bm{\alpha}} \tilde{\bm{\alpha}}}^{(p)} \\
		\end{bmatrix}, \nbthis \label{eq_FIM}
	\end{align*}
	where $J_{\theta \theta}^{(p)} = \frac{2}{\sigmas}\re{\frac{\partial (\tilde{\vx}^{\mathtt{s}}_{p})^\H}{\partial \theta_{p}} \frac{\partial \tilde{\vx}^{\mathtt{s}}_{p}}{\partial \theta_{p}}}$, $\mJ_{\theta \tilde{\bm{\alpha}}}^{(p)} = \frac{2}{\sigmas} \re{\frac{\partial (\tilde{\vx}^{\mathtt{s}}_{p})^\H}{\partial \theta_{p}} \frac{\partial \tilde{\vx}^{\mathtt{s}}_{p}}{\partial \tilde{\bm{\alpha}}}}$, and $\mJ_{\tilde{\bm{\alpha}} \tilde{\bm{\alpha}}}^{(p)} = \frac{2}{\sigmas} \re{\frac{\partial (\tilde{\vx}^{\mathtt{s}}_{p})^\H}{\partial \tilde{\bm{\alpha}}} \frac{\partial \tilde{\vx}^{\mathtt{s}}_{p}}{\partial \tilde{\bm{\alpha}}}}$.
	The CRLB for $\theta_{p}$ is derived in the following theorem.
	\begin{theorem}\label{theo_CRB}
		The CRLB for $\theta_{p}$ at the $p$th AP is given as
		\begin{align*}
			\mathtt{CRLB}_{p}^{\theta}(\bm{\Psi})  = \frac{\bar{\sigma}_{\mathtt{s}} \tau_{p}(\bm{\Psi})}{ \tilde{\tau}_{p}(\bm{\Psi}) \tau_{p} (\bm{\Psi}) - \abs{ \hat{\tau}_{p} (\bm{\Psi})}^2}, \nbthis \label{eq_CRLB_theo}
		\end{align*}
		where
		\begin{align*}
			&\tau_{p}(\bm{\Psi}) \triangleq c_{p,1} \bm{\xi}_{p}^\T \bm{\gamma}_{p} + c_{p,2} \norm{\bar{\bm{\eta}}_{p}}^2, \nbthis \label{def_tau} \\
			&\tilde{\tau}_{p}(\bm{\Psi}) \triangleq \tilde{c}_{p,1} \bm{\xi}_{p}^\T \bm{\gamma}_{p} + \tilde{c}_{p,2} \norm{\bar{\bm{\eta}}_{p}}^2,  \nbthis \label{def_tau_tilde} \\
			&\hat{\tau}_{p}(\bm{\Psi}) \triangleq \hat{c}_{p,1} \bm{\xi}_{p}^\T \bm{\gamma}_{p} + \hat{c}_{p,2} \norm{\bar{\bm{\eta}}_{p}}^2,  \nbthis \label{def_tau_hat}
		\end{align*}
		In   \eqref{def_tau}, $c_{p,1} \triangleq \trshort{ [ \mB_{p p}^\H \mB_{p p} ]_{pp}}$ and $c_{p,2} \triangleq \trshort{ \tilde{\mA}_{pp} [ \mB_{p p}^\H \mB_{p p} ]_{pp}}$; in \eqref{def_tau_tilde}, $\tilde{c}_{p,1} \triangleq \trshort{ [ \dot{\tilde{\mB}}_{p}^\H \dot{\tilde{\mB}}_{p} ]_{pp}}$ and  $\tilde{c}_{p,2} \triangleq \trshort{ \tilde{\mA}_{pp} [\dot{\tilde{\mB}}_{p}^\H \dot{\tilde{\mB}}_{p} ]_{pp}}$; in \eqref{def_tau_hat},  $\hat{c}_{p,1} \triangleq \trshort{ [\mB_{p p}^\H \dot{\tilde{\mB}}_{p} ]_{pp}}$ and  $\hat{c}_{p,2} \triangleq \trshort{ \tilde{\mA}_{pp} [\mB_{p p}^\H \dot{\tilde{\mB}}_{p} ]_{pp}}$. Here, $\dot{\tilde{\mB}}_{p} \triangleq \sum_{\ell=1}^{L} \alpha_{\ell p} (\dot{\vb}_{p}  \bar{\va}_{\ell}^\H + \delta_{\ell p} \vb_{p}  \dot{\bar{\va}}_{\ell}^\H ) \mD_{p}$ with  $[\mZ]_{p p}$ being the sub-matrix containing entries on rows and columns $\{(p-1)\Nt + 1, \ldots,  p \Nt\}$ in matrix $\mZ$. Furthermore, we have $\bm{\gamma}_p \triangleq [\gamma_{1 p}, \ldots, \gamma_{K p}]^\T$, and $\tilde{\mA}_{p p} \triangleq \va_p  \va_{p}^\H$.
	\end{theorem}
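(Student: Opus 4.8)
The plan is to evaluate the Fisher information matrix in \eqref{eq_FIM}, invert it through its Schur complement to isolate the entry corresponding to $\theta_p$, and then replace the data- and channel-dependent quantities by their statistical averages.

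\textbf{Step 1 (partial derivatives).} Writing $\tilde{\vx}^{\mathtt{s}}_p=\sum_{\ell}\alpha_{\ell p}\,\mathtt{vec}(\mB_{\ell p}\mX)$ and using $\mathtt{vec}(\mB_{\ell p}\mX)=(\mX^\T\otimes\mI_{\Nr})\mathtt{vec}(\mB_{\ell p})$ together with the fact that $\mX$ is independent of $\bm{\omega}_p$, I would differentiate term by term. The receive steering vector $\vb_p$ depends on $\theta_p$ in \emph{every} echo path, whereas for the monostatic path $\ell=p$ the transmit steering vector $\va_p$ depends on $\theta_p$ as well, since $\phi_p=\theta_p$; this yields the two contributions of $\dot{\tilde{\mB}}_p$ and the factor $\delta_{\ell p}$. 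Hence $\partial\tilde{\vx}^{\mathtt{s}}_p/\partial\theta_p=\mathtt{vec}(\dot{\tilde{\mB}}_p\mX)$, and since only $\alpha_{pp}$ is estimated, $\partial\tilde{\vx}^{\mathtt{s}}_p/\partial\tilde{\bm{\alpha}}_{pp}=[\vc,\,j\vc]$ with $\vc\triangleq\mathtt{vec}(\mB_{pp}\mX)$.

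\textbf{Step 2 (FIM blocks and inversion).} Substituting into \eqref{eq_FIM} and using $\fronorm{\mZ\mX}^2=\trshort{\mX^\H\mZ^\H\mZ\mX}$, I obtain $J_{\theta\theta}^{(p)}=\tfrac{2}{\sigmas}\fronorm{\dot{\tilde{\mB}}_p\mX}^2$; since $\re{\vc^\H(j\vc)}=0$, the block $\mJ_{\tilde{\bm{\alpha}}\tilde{\bm{\alpha}}}^{(p)}=\tfrac{2}{\sigmas}\fronorm{\mB_{pp}\mX}^2\mI_2$ is a scaled identity, while $\mJ_{\theta\tilde{\bm{\alpha}}}^{(p)}=\tfrac{2}{\sigmas}[\re{d_p},\,-\im{d_p}]$ with $d_p\triangleq\trshort{\mX^\H\dot{\tilde{\mB}}_p^\H\mB_{pp}\mX}$. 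Because $\mJ_{\tilde{\bm{\alpha}}\tilde{\bm{\alpha}}}^{(p)}\propto\mI_2$, the Schur complement $J_{\theta\theta}^{(p)}-\mJ_{\theta\tilde{\bm{\alpha}}}^{(p)}(\mJ_{\tilde{\bm{\alpha}}\tilde{\bm{\alpha}}}^{(p)})^{-1}(\mJ_{\theta\tilde{\bm{\alpha}}}^{(p)})^\T$ reduces to $\tfrac{2}{\sigmas}\big(\fronorm{\dot{\tilde{\mB}}_p\mX}^2-|d_p|^2/\fronorm{\mB_{pp}\mX}^2\big)$, so that $\mathtt{CRLB}_p^\theta=\frac{\sigmas}{2}\cdot\frac{\fronorm{\mB_{pp}\mX}^2}{\fronorm{\mB_{pp}\mX}^2\fronorm{\dot{\tilde{\mB}}_p\mX}^2-|d_p|^2}$.

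\textbf{Step 3 (statistical averaging).} Each of $\mB_{pp}^\H\mB_{pp}$, $\dot{\tilde{\mB}}_p^\H\dot{\tilde{\mB}}_p$ and $\mB_{pp}^\H\dot{\tilde{\mB}}_p$ carries the factor $\mD_p$ on both sides, hence is supported only on the $(p,p)$ block; the three quantities above therefore reduce to $\trshort{[\cdot]_{pp}\,\mX_p\mX_p^\H}$, involving only AP $p$'s transmit signal. Using $\mX_p=\mF_p\mS$ with $\mF_p$ from \eqref{eq_F}, $\mS\mS^\H\approx T\mI_K$, $\meanshort{\hat{\vh}_{kp}\hat{\vh}_{kp}^\H}=\xi_{kp}\mI_{\Nt}$, and the zero-mean/independence of the cross terms, I get $\meanshort{\mX_p\mX_p^\H}\approx T\big[(\bm{\xi}_p^\T\bm{\gamma}_p)\mI_{\Nt}+\norm{\bar{\bm{\eta}}_p}^2\tilde{\mA}_{pp}\big]$. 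Substituting this into the three traces and matching with the definitions of $c_{p,\cdot}$, $\tilde c_{p,\cdot}$, $\hat c_{p,\cdot}$ gives $\meanshort{\fronorm{\mB_{pp}\mX}^2}=T\tau_p$, $\meanshort{\fronorm{\dot{\tilde{\mB}}_p\mX}^2}=T\tilde\tau_p$ and $|\meanshort{d_p}|^2=T^2|\hat\tau_p|^2$; inserting these into the expression from Step 2 (averaging numerator and denominator) produces \eqref{eq_CRLB_theo}, with the constant $\bar\sigma_{\mathtt{s}}$ absorbing $\sigmas/2$ and the frame-length factor, i.e.\ $\bar\sigma_{\mathtt{s}}=\sigmas/(2T)$.

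\textbf{Main obstacle.} The vec/trace bookkeeping is routine; the delicate parts are (i) deriving $\dot{\tilde{\mB}}_p$ correctly, in particular tracking the monostatic coupling $\phi_p=\theta_p$ that generates the $\delta_{\ell p}$ term, and (ii) the reduction to AP-$p$-only quantities. The off-diagonal (cross-AP) blocks of $\meanshort{\mX\mX^\H}$ are nonzero whenever two APs simultaneously allocate sensing power, and one must argue via the $\mD_p$ projections that they do not enter the FIM for $\theta_p$; a secondary point is the usual ``average CRLB'' convention of averaging the numerator and denominator of the Schur complement separately rather than its reciprocal.
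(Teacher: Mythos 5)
Your proposal matches the paper's proof in all essentials: the same derivative $\dot{\tilde{\mB}}_p$ (including the $\delta_{\ell p}$ monostatic term from $\phi_p=\theta_p$), the same block FIM with $\mJ_{\tilde{\bm{\alpha}} \tilde{\bm{\alpha}}}^{(p)}\propto\mI_2$ inverted via the Schur complement to get $[\mJ_{\bm{\omega}_p}^{-1}]_{11}$, the same averaged signal covariance $[\mR_x]_{pp}=\bm{\xi}_p^\T\bm{\gamma}_p\mI_{\Nt}+\normshort{\bar{\bm{\eta}}_p}^2\tilde{\mA}_{pp}$, and the same $\mD_p$-projection argument confining the traces to the $(p,p)$ block, with $\bar{\sigma}_{\mathtt{s}}=\sigmas/(2T)$. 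The only (immaterial) difference is ordering: the paper inserts $\mR_x$ into the FIM entries before inverting, whereas you invert the deterministic FIM first and then average numerator and denominator separately, a convention you correctly flag.
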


 \begin{proof}
     See Appendix \ref{appd_CRB}. \epr
 \end{proof}

	\section{Power Allocation} 
	With the derived achievable rate and CRLB based on MRT precoding, we propose a power allocation scheme to maximizes the communications sum rate while ensuring the sensing CRB and power constraints. 
	The problem is formulated as
	\begin{subequations}\label{problem1}
		\begin{align*} 
			\underset{\substack{\bm{\Psi}}}{\textrm{maximize}} \quad &  \sum\nolimits_{\substack{k=1}}^K R_k(\bm{\Psi})   \nbthis \label{obj_rate} \\
			\textrm{subject to} \quad
			&\mathtt{CRLB}_{\theta_p}(\bm{\Psi})\leq \mathtt{CRLB}_{\theta_p}^\mathtt{th},\ \forall p, \nbthis \label{cons_CRB} \\
			&\Nt \sum\nolimits_{\ell =1}^{L} \bm{\xi}_{\ell}^\T \bm{\gamma}_{\ell} + \norm{\bar{\bm{\eta}}_\ell}^2 \leq P_{\mathtt{t}}, \nbthis \label{cons_power}
		\end{align*}
	\end{subequations}
	where $\mathtt{CRLB}_{\theta_p}^\mathtt{th}$ is a threshold to guarantee the sensing performance, and $P_{\mathtt{t}}$ is the power budget for all APs. The left-hand-side of \eqref{cons_power} is the average total transmit power at all the APs, i.e., $\sum_{\ell =1}^{L} \mean{\tr{\mF_{\ell} \mF_{\ell}^\H}}$, whose detailed derivations are omitted here due to the space constraint. While \eqref{cons_power} is convex,  objective function \eqref{obj_rate} and constraint \eqref{cons_CRB} are not. We overcome this by leverageing SCA, as elaborated next.
	
\subsection{Proposed Solution}

	\subsubsection{Address Objective Function \eqref{obj_rate}}
	We first introduce variable $\omega_{k\ell} \triangleq \sqrt{\gamma_{k \ell}}, \forall k, \ell$ and denote $\bm{\omega}_{k} \triangleq [\omega_{k1}, \ldots, \omega_{kL}]$. The objective function can be rewritten as $R_k(\bm{\omega}_k,\bm\Psi)=\bar \tau \log_2 \!\left(\!\! 1\! +\! \frac{\Nt^2 (\bm{\xi}_k^\T \bm{\omega}_{k})^2}{\Nt \sum_{j=1}^K \left( \hat{\bm{\beta}}_{kj}^\T {\bm{\gamma}}_j \!+ \bm{\beta}_{k}^\T {\bm{\eta}}_j \!\right) \!+\! \sigmac}\!\right)$.
To tackle the non-convexity of $R_k(\bm{\omega}_k,\bm\Psi)$, we introduce the following concave lower bound of $R_k(\bm{\omega}_k,\bm\Psi)$ \cite{fang2024beamforming}:
 \begin{align*}
    & f(\bm{\omega}_k,\bm\Psi)= \log_2(1+\zeta_k^{(i)})-\zeta_k^{(i)}+2\Nt\iota_k^{(i)}\sqrt{1+\zeta_k^{(i)}}\bm\xi_k^\T\bm{\omega}_k \notag\\
     &-\abs{\iota_k^{(i)}}^2\left(\Nt^2\abs{\bm{\xi}_k^\T \bm{\omega}_{k} }^2+\Nt \sum\nolimits_{j=1}^K \left(  \hat{\bm{\beta}}_{kj}^\T {\bm{\gamma}}_j \!+ \bm{\beta}_{k}^\T {\bm{\eta}}_j \right) \!+\! \sigmac\right),
 \end{align*}
 where $\xi_k^{(i)}$ and $\iota_k^{(i)}$ are constants in iteration $i$, given as
 \begin{align*}
     \zeta_k^{(i)} &= \frac{\Nt^2 \abs{\bm{\xi}_k^\T \bm{\omega}_{k}^{(i)} }^2}{\Nt \sum_{j=1}^K \left( \hat{\bm{\beta}}_{kj}^\T \bm{\gamma}_j^{(i)}  \!+ \bm{\beta}_{k}^\T  {\bm{\eta}}_j^{(i)} \right) \!+\! \sigmac}, \nbthis \label{eq_zeta} \\
      \!\iota_k^{(i)} \! &= \! \frac{\Nt\sqrt{1\!+\!\zeta_k^{(i)}}\bm\xi_k^\T\bm{\omega}_k^{(i)}}{\Nt^2 \abs{\bm{\xi}_k^\T \bm{\omega}_{k}^{(i)} }^2\!\!\!+\!\Nt \sum\limits_{j=1}^K \left( \hat{\bm{\beta}}_{kj}^\T \bm{\gamma}_j^{(i)} \!+ \bm{\beta}_{k}^\T {\bm{\eta}}_j^{(i)} \right) \!+\! \sigmac}. \nbthis \label{eq_iota} 
 \end{align*}
 Note that $\gamma_{kl}=\omega_{kl}^2$ is still non-convex. We tackle this by constraining $\omega_{kl}^2\leq \gamma_{kl}\leq \omega_{kl}^2$, which can be convexified as
  \begin{equation}\label{quadratic relaxation}
     \omega_{kl}^2\leq \gamma_{kl}\leq 2w_{kl}^{(i)}\omega_{kl}-\left(\omega_{kl}^{(i)}\right)^2.
 \end{equation}
	\subsubsection{Address Constraint \eqref{cons_CRB}}
 
 We introduce a new variable $\vartheta$ and rewrite \eqref{cons_CRB} as
 \begin{subequations}
\begin{empheq}                      [left=\eqref{cons_CRB}\Leftrightarrow\empheqlbrace]{align}
        &\frac{\bar{\sigma}_{\mathtt{s}}}{\mathtt{CRLB}_{\theta_p}^\mathtt{th}} \tau_{p}(\bm{\Psi}) + \abs{ \hat{\tau}_{p} (\bm{\Psi})}^2 \leq \vartheta_p^2,  \forall p &\label{cons_CRB3a}  \\
        &\tau_{p}(\bm{\Psi}) \tilde{\tau}_{p}(\bm{\Psi}) \geq \vartheta_p^2,  \forall p. &\label{cons_CRB3b}
\end{empheq} \label{cons_CRB3}
\end{subequations}
Note that in \eqref{def_tau}--\eqref{def_tau_hat}, $\norm{\bar{\bm{\eta}}_p}^2 = \sum_{p = 1}^L \eta_{kp}$. Thus, $\tau_{p}(\bm{\Psi}), \tilde{\tau}_p(\bm\Psi) $ and $\hat{\tau}_p(\bm\Psi)$ are linear with respect to $\bm\Psi$, and constraint \eqref{cons_CRB3a} can be convexified at iteration $i$ as
\begin{align*}
    \frac{\bar{\sigma}_{\mathtt{s}}}{t} \tau_{p}(\bm{\Psi} ) + \abs{ \hat{\tau}_{p} (\bm{\Psi})}^2 \leq 2 \vartheta_p^{(i)} \vartheta_p - \left(\vartheta_p^{(i)}\right)^2,\  \forall p. \nbthis \label{cons_CRB5}
\end{align*}
Furthermore, \eqref{cons_CRB3b} can be recast into the following second-order constraint (SOC):
\begin{align*}
    \norm{\vartheta_p, \frac{1}{2} \left(\tau_{p}(\bm{\Psi}) - \tilde{\tau}_{p}(\bm{\Psi}) \right)}_2 \leq \frac{1}{2} \left( \tau_{p}(\bm{\Psi}) +\tilde{\tau}_{p}(\bm{\Psi}) \right). \nbthis \label{cons_CRB6}
\end{align*}
As a result, \eqref{problem1} can be solved sequentially by the following problem at iteration $i$:
\begin{subequations}\label{problem3}
    \begin{align*}
        \underset{\substack{\{\bm{\omega}_k\}_{k=1}^K,\bm{\Psi}}}{\textrm{maximize}} \quad &  \sum\nolimits_{k=1}^K f_k(\bm{\omega}_k,\bm\Psi) \nbthis \label{obj_CRB_2} \\
        \textrm{subject to} \quad
        & \eqref{cons_power}, \eqref{quadratic relaxation},\eqref{cons_CRB5}, \eqref{cons_CRB6}, \nbthis
    \end{align*}
\end{subequations}
which is convex and can be solved with standard convex optimization solver such as CVX. The proposed power allocation method is outlined in Algorithm \ref{alg_power_allocation}, which is self-explanatory. 

	\addtolength{\topmargin}{0.01in}
	\setlength{\textfloatsep}{7pt}	
	\begin{algorithm}[h]
		\small
		\textbf{Initialize}: $i = 0$, $\bm{\gamma}^{(0)}$, $\bm{\eta}^{(0)}$,  and $\vw^{(0)}$.\\
		\Repeat{the objective value in \eqref{obj_CRB_2} converges}{
			$i = i+1$.\\
			Update $\{\zeta_k^{(i)}\}$ and $\{\iota_k^{(i)}\}$ by \eqref{eq_zeta} and \eqref{eq_iota}, respectively.\\
			Solve \eqref{problem3} to obtain optimal solution $(\bm{\gamma}^{\star}, \bm{\eta}^{\star}, \vw^{\star})$.\\
		}	
		\caption{Proposed Power Allocation Scheme}
		\label{alg_power_allocation}				
	\end{algorithm}
	
	\begin{figure*}[t]%
		\vspace{-0.5cm}
		\hspace{-0.65cm}%
		\subfigure[Sum rate vs. $L$]{%
			\includegraphics[height=1.95in]{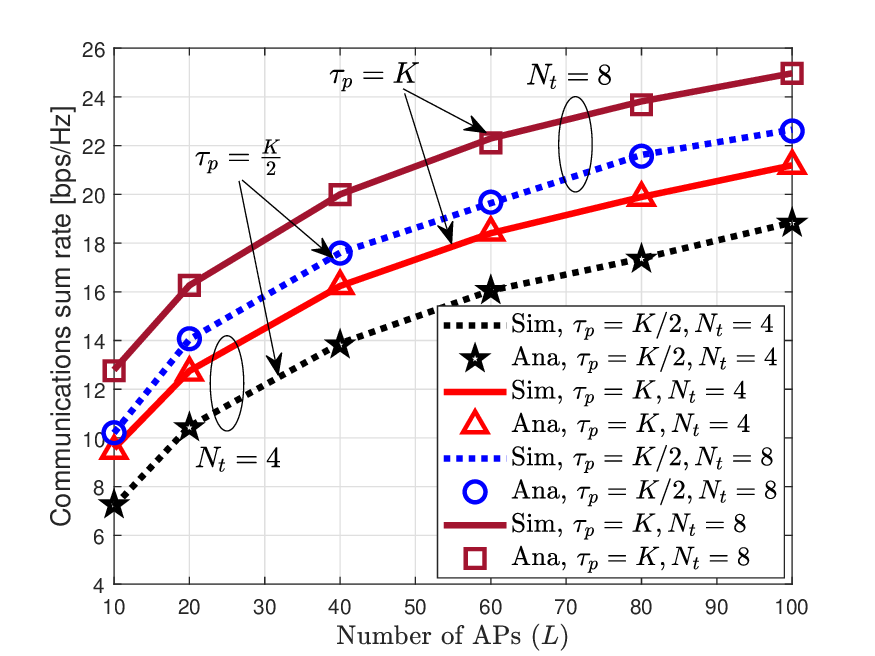}
			\label{fig_rate_L}} 
		\hspace{-0.4cm}
		\subfigure[Convergence]{%
			\includegraphics[height=1.95in]{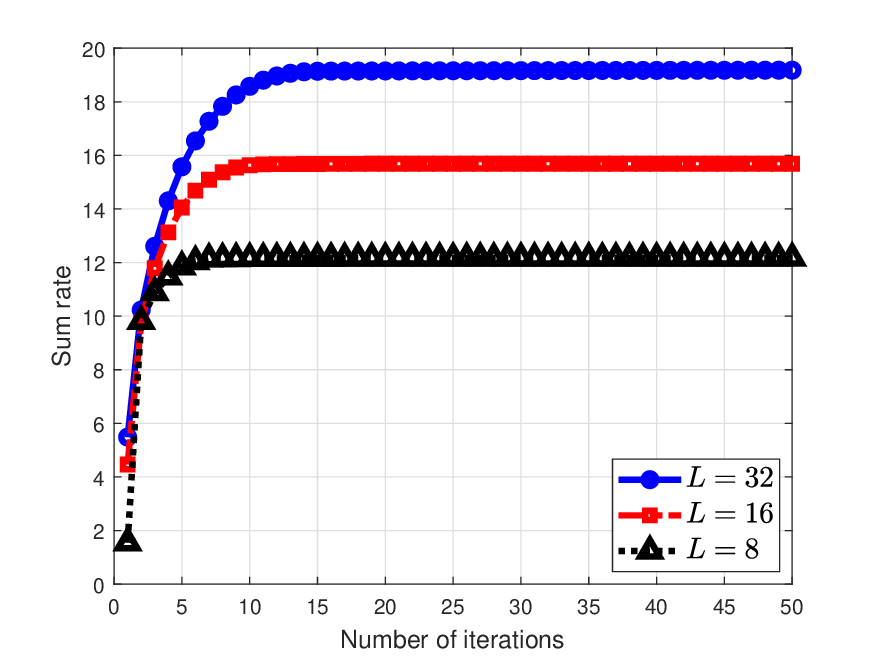}
			\label{fig_conv}}%
		\hspace{-0.55cm}%
		\subfigure[Sum rate vs. CRLB threshold]{%
			\includegraphics[height=1.95in]{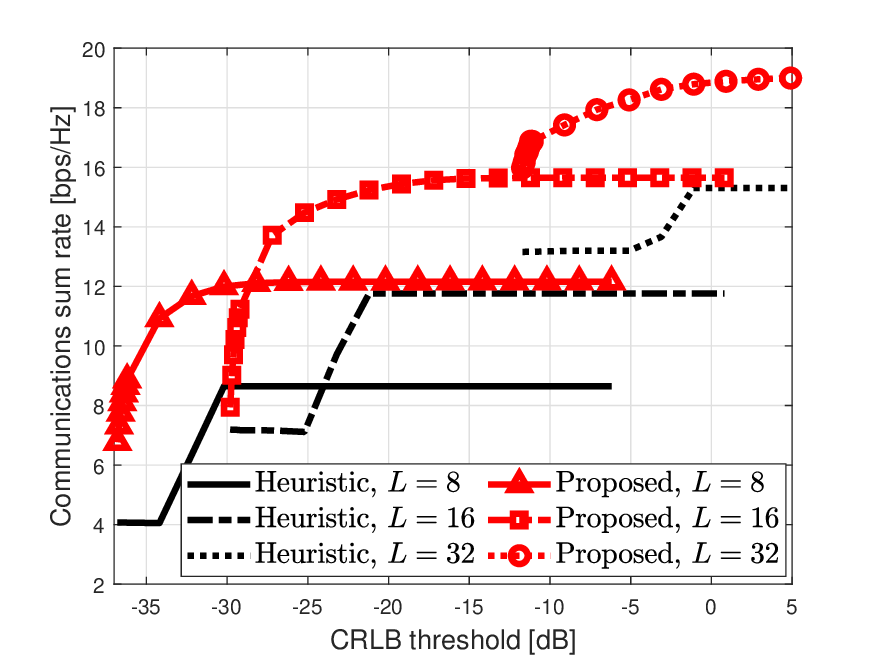}
			\label{fig_rate_vs_CRB0}} 
		\caption{Convergence and performance of the proposed power allocation scheme in the considered CF mMIMO ISAC system with $L=\{8,16,32\}$, $\Nt=\Nr=4$, $K=4$. In Figs.~(a) and (b), SNR $=30$ dB; in Fig.~(c), $\mathtt{CRLB}_{\theta_p}^\mathtt{th} = -5$ dB, $\forall p$.}
		\label{fig_SE}%
	\end{figure*}
	
	\section{Simulation Results}
	 \label{sec_sim}

	 In this section, we provide numerical results to validate the theoretical findings and proposed design. We consider a scenario where the APs and UEs have random locations that are uniformly distributed within an area  of $250 \times 250~\text{m}^2$, and their is no AP-UE pair with distance smaller than $r_{\mathtt{h}} = 100$ m. The large-scale coefficients are computed as $\beta_{k\ell} = z_{k\ell}/(r_{k\ell}/r_{\mathtt{h}})^{\nu}$, where $z_{k\ell}$ is a log-normal random variable with standard deviation $\sigma_{\mathtt{shadow}} = 7$ dB, $r_{k\ell}$ is the distance between the $k$-th UE and AP $\ell$, and $\nu = 3.2$ is the path loss exponent~\cite{ngo2013energy}. For simplicity, we assume $\alpha = \frac{\beta_{\mathtt{s}}}{\sqrt{2}}(1 + j)$, where $\beta_{\mathtt{s}} = 10^{-2}$. We set $\Nt=\Nr=4$, $K=4$,  $T=30$, $\sigmac = \sigmas = -30$ dBm, and the SNR is defined as $\mathrm{SNR} = \frac{\Pt}{\noise}$. 
		
	 In Fig.\ \ref{fig_rate_L}, we show the communications sum rate versus the number of APs, i.e., $L$, for $\Nt = \{4,8\}$ and $\tau_{\mathtt{p}} = \{K/2,K\}$. The results are obtained by both the closed-form expression \eqref{eq_SE_theo} and Monte Carlo simulations for $100$ small-scale and $10$ large-scale channel realizations. To focus on verifying the theoretical findings in Theorem \ref{theo_SE} and Remark \ref{rm_effect_of_sensing}, we assume equal power allocations among communications and sensing for all APs, i.e., we set $\left(\eta_{k \ell},\gamma_{k \ell}\right) = \left( \frac{\Pt}{2\Nt L K},  \frac{\Pt}{2\Nt \sum_{k=1}^K {\xi}_k} \right), \forall k, \ell$. It is clear that the closed-form achievable rates in Theorem \ref{theo_SE} align well with those obtained via Monte Carlo simulations. Furthermore, when $L$ and/or $\Nt$ increases, the sum rate increases significantly, verifying the finding in Remark \ref{rm_effect_of_sensing}. The figure also show the performance loss due to pilot contamination when setting $\tau_{\mathtt{p}} < K$.

	 In Fig.\ \ref{fig_conv} we show the convergence of Algorithm \ref{alg_power_allocation} for $L = \{8, 16, 32\}$. The algorithm is initialized with a heuristic power allocation scheme, in which the available power budget is first shared between communications and sensing with factors $\{\rho, 1-\rho\}$, respectively, and then the power for each subsystem is shared equally among data streams. Specifically, we set $\left(\eta_{k \ell},\gamma_{k \ell}\right) = \left( \frac{\rho \Pt}{\Nt L K},  \frac{(1-\rho)\Pt}{\Nt \sum_{k=1}^K {\xi}_k} \right)$ with $\rho = 0.3$ based on empirical simulation. Then, the power factor for sensing $\eta_{k \ell}$ is increased until AP $\ell$ meets the CRLB constraint. It is observed that Algorithm \ref{alg_power_allocation} converges well after $\{8,10,15\}$ iterations for $L = \{8, 16, 32\}$, respectively. It is reasonable that the convergence becomes slower as $L$ increases because the number of design variables is proportional to $L$. 
	
	 In Fig.\ \ref{fig_rate_vs_CRB0}, we show the communications sum rate versus the CRLB thresholds, i.e., $\mathtt{CRLB}_{\theta_p}^\mathtt{th}$ in the optimization problem \ref{problem1}. For $L = \{8, 16, 32\}$, we set different ranges of $\mathtt{CRLB}_{\theta_p}^\mathtt{th}$ to ensure that the constraint \eqref{cons_CRB} is feasible. We compare the performance of Algorithm \ref{alg_power_allocation} to those obtained by the heuristic power allocation discussed above. First, we see that increasing $L$, i.e., deploying more APs, leads to a narrower feasible space, which is reasonable because \eqref{cons_CRB} applies for all the APs. However, for the same $\mathtt{CRLB}_{\theta_p}^\mathtt{th}$, the larger $L$ always offers better sum rate. This aligns well with Remark \ref{rm_effect_of_sensing}.

     \section{Conclusion}
	 \label{sec_conclusion}
	 We have investigated a multi-static CF mMIMO-ISAC system with MRT beamforming at the APs. By deriving closed-form expressions for the achievable communications rate and the CRLB, we have shown the mutual effects of communications and sensing as well as the advantages of using numerous distributed APs in the system. We then optimized the power allocation among the APs to maximize the communications rate subject to the constraints on the sensing performance and total transmit power budget. Finally, we presented numerical results to validate our theoretical findings and demonstrate the efficiency of the proposed power allocation approach.
	
	\section*{Acknowledgement}
	This work was supported in part by the Research Council of Finland through 6G Flagship under Grant 346208 and through project DIRECTION under Grant 354901, through CHIST-ERA PASSIONATE project (grant number 359817), Business Finland, Keysight, MediaTek, Siemens, Ekahau, and Verkotan via project 6GLearn.
	
	\appendices

\section{Proof of Theorem \ref{theo_SE}}
	\label{appd_SE}
	
	To derive the closed-form SE expressions, in the following, we compute $\abs{\mathtt{DS}_k}^2$, $\meanshort{\abs{\mathtt{BU}_k}^2}$, and $\meanshort{\abs{\mathtt{UI}_{kj}}^2}$. We first have
	\begin{align*}
		\vh_{k \ell}^\H \vf_{i \ell} &= (\hat{\vh}_{k \ell} + \ve_{k \ell})^\H \left(\sqrt{\gamma_{i \ell}} \hat{\vh}_{i \ell} + \sqrt{\eta_{i \ell}} \va\right)\\
		&= \sqrt{\gamma_{i \ell}} \hat{\vh}_{k \ell}^\H \hat{\vh}_{i \ell} + \sqrt{\eta_{i \ell}}  \hat{\vh}_{k \ell}^\H \va\\
		&\hspace{2cm}  + \ve_{k \ell}^\H (\sqrt{\gamma_{i \ell}} \hat{\vh}_{i \ell} + \sqrt{\eta_{i \ell}} \va),\ \forall i,\ell, \nbthis \label{eq_hkfmrti}
	\end{align*}
	where we note that
	\begin{align*}
		\mean{\sqrt{\eta_{i \ell}}  \hat{\vh}_{k \ell}^\H \va + \ve_{k \ell}^\H (\sqrt{\gamma_{i \ell}} \hat{\vh}_{i \ell} + \sqrt{\eta_{i \ell}} \va)} = 0, \nbthis \label{eq_nean_0}
	\end{align*}
	as $\ve_{k\ell}$ and $\hat{\vh}_{i\ell}$ are independent and they both have zero means.
	
	
	\subsubsection{Computation of $\abs{\mathtt{DS}_k}^2$}
	Using \eqref{eq_hkfmrti} and \eqref{eq_nean_0}, we obtain
 \begin{align*}
		\mathtt{DS}_k &= \sum_{\ell =1}^{L} \sqrt{\gamma_{k \ell}} \mean{\normshort{\hat{\vh}_{k \ell}}^2} = \Nt \sum_{\ell =1}^{L}  \xi_{k \ell} \sqrt{\gamma_{k \ell}} = \Nt \bm{\xi}_k^\T \bar{\bm{\gamma}}_k,
	\end{align*}
    where we have used the definitions of $\bm{\xi}_k$ and $\bar{\bm{\gamma}}_k$ in Theorem \ref{theo_SE}. As a result, $\abs{\mathtt{DS}_k}^2$ is given as
    \begin{align*}
		\abs{\mathtt{DS}_k}^2 &= \Nt^2 \left(\bm{\xi}_k^\T \bar{\bm{\gamma}}_k\right)^2. \nbthis \label{eq_abs_DSk2}
	\end{align*}
	
	\subsubsection{Computation of $\meanshort{\abs{\mathtt{BU}_k}^2}$}
	Because the variance of a sum of independent random variables is equal to the sum of the variances, we have
	\begin{align*}
		\mean{\abs{\mathtt{BU}_k}^2} &= \mean{\abs{\sum_{\ell =1}^{L}  \vh_{k \ell}^\H {\vf}_{k \ell} - \mean{ \sum_{\ell =1}^{L} \vh_{k \ell}^\H {\vf}_{k \ell}}}^2}\\
		&= \sum_{\ell =1}^{L} \mean{\abs{ \vh_{k \ell}^\H {\vf}_{k \ell} - \mean{\vh_{k \ell}^\H {\vf}_{k \ell}}}^2}\\
		&= \sum_{\ell =1}^{L}  \underbrace{\mean{\abs{\vh_{k \ell}^\H {\vf}_{k \ell}}^2}}_{\triangleq E_{0k \ell}} - \underbrace{\abs{\mean{\vh_{k \ell}^\H {\vf}_{k \ell}}}^2}_{\triangleq E_{1k\ell}}.\nbthis \label{eq_nean_abs_BUk2}
	\end{align*}
	Here, $E_{0k \ell}$ can be computed as:
	\begin{align*}
		E_{0k \ell} &= \mathbb{E} \left\{ \left|\sqrt{\gamma_{k \ell}} \normshort{\hat{\vh}_{k \ell}}^2 + \sqrt{\eta_{k \ell}}  \hat{\vh}_{k \ell}^\H \va \right. \right.\\
		&\hspace{2cm} \left. \left.  + \ve_{k \ell}^\H (\sqrt{\gamma_{k \ell}} \hat{\vh}_{k \ell} + \sqrt{\eta_{k \ell}} \va)\right|^2 \right\}\\
		&= \gamma_{k \ell} \mean{\normshort{\hat{\vh}_{k \ell}}^4} + \eta_{k \ell} \mean{\absshort{\hat{\vh}_{k \ell}^\H \va}^2} \\
		&\hspace{2cm} + \mean{\absshort{\ve_{k \ell}^\H (\sqrt{\gamma_{k \ell}} \hat{\vh}_{k \ell} + \sqrt{\eta_{k \ell}} \va)}^2}. \nbthis \label{eq_E000}
	\end{align*}
	Note that the elements of $\va$ are deterministic with unit modulus, and $\hat{\vh}_{k \ell} \sim \mathcal{CN}(0, \xi_{k \ell} \mI_{\Nt})$. Thus, we obtain $\mean{\normshort{\hat{\vh}_{k \ell}}^4} = \Nt (\Nt+1) \xi_{k \ell}^2$, $\mean{\absshort{\hat{\vh}_{k \ell}^\H \va}^2} = \mean{\normshort{\hat{\vh}_{k \ell}}^2} = \Nt \xi_{k \ell}$, and $\mean{\absshort{\ve_{k \ell}^\H (\sqrt{\gamma_{k \ell}} \hat{\vh}_{k \ell} \!+\! \sqrt{\eta_{k \ell}} \va)}^2} = \Nt \epsilon_{k \ell}\! \left(\xi_{k \ell} \gamma_{k \ell} \!+\! \eta_{k \ell} \right)$. Hence, $E_{0k\ell}$ in \eqref{eq_nean_abs_BUk2} can be obtained as
	\begin{align*}
		E_{0k\ell} 
		&= \Nt^2 \xi_{k \ell}^2 \gamma_{k \ell} + \Nt \beta_{k \ell} \left( \xi_{k \ell} \gamma_{k \ell} + \eta_{k \ell} \right), \nbthis \label{eq_E0001}
	\end{align*}
	where we have used $\xi_{k \ell} + \epsilon_{k \ell} = \beta_{k \ell}$. Furthermore, it follows \eqref{eq_hkfmrti} and \eqref{eq_nean_0} that
 \begin{align*}
     E_{1k\ell} = \abs{\sqrt{\gamma_{k \ell}} \mean{\norm{\hat{\vh}_{k \ell}}^2}}^2 = \Nt^2 \xi_{k\ell}^2 \gamma_{k \ell}. \nbthis \label{eq_E1}
 \end{align*}
 From \eqref{eq_nean_abs_BUk2}, \eqref{eq_E0001}, and \eqref{eq_E1}, we obtain
	\begin{align*}
		\mean{\abs{\mathtt{BU}_k}^2} &= \Nt \sum_{\ell =1}^{L} \beta_{k \ell}  \left( \xi_{k \ell} \gamma_{k \ell} + \eta_{k \ell} \right)\\
		&= \Nt \left( \hat{\bm{\beta}}_{kk}^\T \bm{\gamma}_k + \bm{\beta}_{k}^\T \bm{\eta}_k \right), \nbthis \label{eq_nean_abs_BUk21}
	\end{align*}
	where $\hat{\bm{\beta}}_{kk}$ and $\bm{\beta}_{k}$ are defined in Theorem \ref{theo_SE}.
	
	\subsubsection{Computation of $\meanshort{\abs{\mathtt{UI}_{kj}}^2}$}
	We compute this term as
	\begin{align*}
		&\mean{\abs{\mathtt{UI}_{kj}}^2} = \mean{\abs{\sum\nolimits_{\ell =1}^{L} \vh_{k \ell}^\H {\vf}_{j\ell}}^2}\\
		&= \mean{\abs{\sum\nolimits_{\ell =1}^{L} \sqrt{\gamma_{j\ell}} \hat{\vh}_{k \ell}^\H \hat{\vh}_{j\ell}}^2} + \mean{\abs{\sum\nolimits_{\ell =1}^{L} \sqrt{\eta_{j\ell}} \hat{\vh}_{k \ell}^\H \va_{\ell}}^2} \\
		&\qquad + \mean{\abs{\sum\nolimits_{\ell =1}^{L} \ve_{k \ell}^\H (\sqrt{\gamma_{j\ell}} \hat{\vh}_{j\ell} + \sqrt{\eta_{j\ell}} \va_{\ell})}^2}\\
		&= \sum\nolimits_{\ell =1}^{L} \gamma_{j\ell} \mean{\abs{\hat{\vh}_{k \ell}^\H \hat{\vh}_{j\ell}}^2} + \sum\nolimits_{\ell =1}^{L} \eta_{j\ell} \mean{\abs{ \hat{\vh}_{k \ell}^\H \va_{\ell}}^2} \\
		&\qquad + \mean{\abs{\sum\nolimits_{\ell =1}^{L} \ve_{k \ell}^\H (\sqrt{\gamma_{j\ell}} \hat{\vh}_{j\ell} + \sqrt{\eta_{j\ell}} \va_{\ell})}^2}\\
		&= \Nt \sum\nolimits_{\ell =1}^{L} \left(\gamma_{j\ell}  \xi_{k \ell} \xi_{j\ell} + \eta_{j\ell} \xi_{k \ell} + \epsilon_{k \ell} (\gamma_{j\ell} \xi_{j\ell} + \eta_{j\ell})\right) \\
		&= \Nt \left( \hat{\bm{\beta}}_{kj}^\T \bm{\gamma}_j + \bm{\beta}_{k}^\T \bm{\eta}_j  \right). \nbthis \label{eq_nean_abs_UIk2}
	\end{align*}
	From \eqref{eq_abs_DSk2}, \eqref{eq_nean_abs_BUk21}, and \eqref{eq_nean_abs_UIk2}, we obtain \eqref{eq_SE_theo}.
	
	\section{Proof of Theorem \ref{theo_CRB}}
	\label{appd_CRB}
 
		Recalling that $\tilde{\vx}^{\mathtt{s}} _{p} = \mathtt{vec}(\mX^{\mathtt{s}} _{p})$, with $\mX^{\mathtt{s}} _{p}$ given in \eqref{def_X}, it follows that $\frac{\partial \tilde{\vx}^{\mathtt{s}}_{p}}{\partial \theta_{p}} = \mathtt{vec}(\sum_{\ell=1}^{L} \alpha_{\ell p} \dot{\mB}_{\ell p} \mX)
		=\mathtt{vec}(\dot{\tilde{\mB}}_{p} \mX)$ 
		where $\dot{\tilde{\mB}}_{p} \triangleq \sum_{\ell=1}^{L} \alpha_{\ell p} \dot{\mB}_{\ell p}$. Based on \eqref{def_Blplp} with the note that $\bar{\va}_{p}$ is independent of $\theta_{p}$, we have $\dot{\mB}_{\ell p} = \sum_{\ell=1}^{L} (\dot{\vb}_{p}  \bar{\va}_{\ell}^\H + \delta_{\ell p} \vb_{p}  \dot{\bar{\va}}_{\ell}^\H ) \mD_{p}$, where $\dot{\vb}_{p} \! =  \! \frac{\partial \vb_p}{\partial \theta_{p}} \!  = \!  \left[-j\pi\cos(\theta_{p}),\ldots,-j\pi\Nr\cos(\theta_{p})\right]^\T \odot \vb_p$ and $\dot{\bar{\va}}_{\ell}^\H  =  \! \frac{\partial \bar{\va}_{\ell}^\H}{\partial \theta_{p}} = \left[\bm{0}, \ldots, \left[-j\pi\cos(\phi_{\ell}),\ldots,-j\pi\Nt\cos(\phi_{\ell})\right]^\T \odot \va_{\ell}, \ldots, \bm{0}\right]$,~with $\odot$ denoting the Hadamard product of two vectors.
		Follow similar steps as in \cite{song2023intelligent}, the (block) entries of $\mJ_{\bm{\omega}_{p}}$ can be obtained as:
		\begin{align*}
			J_{\theta \theta}^{(p)} 
			&= \bar{\sigma}_{\mathtt{s}}^{-1}  \tr{ \dot{\tilde{\mB}}_{p} \mR_x \dot{\tilde{\mB}}_{p}^\H }, \nbthis \label{eq_J11_result} \\
			\mJ_{\theta \tilde{\bm{\alpha}}}^{(p)} 
			&= \bar{\sigma}_{\mathtt{s}}^{-1}  \left\{ [1,j]  \tr{\mB_{p p}  \mR_x \dot{\tilde{\mB}}_{p}^\H }\right\}, \nbthis \label{eq_J12_result}\\
			\mJ_{\tilde{\bm{\alpha}} \tilde{\bm{\alpha}}}^{(p)} 
			&= \bar{\sigma}_{\mathtt{s}}^{-1}  \tr{ \mB_{p p} \mR_x \mB_{p p}^\H } \mI_2, \nbthis \label{eq_J22_result}
		\end{align*}
		where $\bar{\sigma}_{\mathtt{s}} \triangleq \frac{\sigmas}{2T}$ and $\mR_x =  \frac{1}{T} \mean{\mX \mX^\H} \in \setC^{L\Nt \times L\Nt}$. From \eqref{def_X}, it is observed that $\mR_x$ include $L \times L$ sub-matrices, and the $(\ell,p)$-th one is computed as $[\mR_x]_{\ell p} = \frac{1}{T}\mean{\mX_{\ell} \mX_{p}^\H} = \mean{\mF_{\ell} \mS \mS^\H \mF_{p}^\H} = \mean{\mF_{\ell} \mF_{p}^\H}$, where $\mF_{\ell}$ is given in \eqref{eq_F} and $\mS \mS^\H = T \mI_K$. Note that $\mathbb{E}\{\hat{\mH}_{\ell}\} = \bm{0}$. After some manipulation, we obtain
		\begin{align*}
			[\mR_x]_{\ell p} 
			&= \sum_{k=1}^K \sqrt{\gamma_{k\ell }} \sqrt{\gamma_{k p}} \mean{\hat{\vh}_{k \ell} \hat{\vh}_{k p}^\H} + \bar{\bm{\eta}}_\ell^\T \bar{\bm{\eta}}_{p}^* \va_\ell  \va_{p}^\H \\
			&= \delta_{\ell p} \bm{\xi}_{p}^\T \bm{\gamma}_{p} \mI_{\Nt} + \bar{\bm{\eta}}_{p}^\T \bar{\bm{\eta}}_\ell \tilde{\mA}_{\ell p}. \nbthis \label{eq_Rx}
		\end{align*}
		where $\bm{\gamma}_\ell$ and $\tilde{\mA}_{\ell p}$ are defined in Theorem \ref{theo_CRB}. The last equality follows the fact that $\hat{\vh}_{k \ell}$ and $ \hat{\vh}_{k p}$ are both zero-mean and independent vectors for $\ell \neq p$, while $\mean{\hat{\vh}_{k p} \hat{\vh}_{k p}^\H} = \xi_{k \ell} \mI_{\Nt}$. As a result, we can derive the CRLB for $\theta_{p}$ as
		\begin{align*}
			&\mathtt{CRLB}_{p}^{\theta}(\bm{\Psi}) = \left[\mJ_{\bm{\omega}_{p}}^{-1}\right]_{11} \! 
			= \frac{\bar{\sigma}_{\mathtt{s}} \tau_{p}(\bm{\Psi})}{ \tilde{\tau}_{p}(\bm{\Psi}) \tau_{p} (\bm{\Psi}) - \abs{ \hat{\tau}_{p} (\bm{\Psi})}^2}
		\end{align*}
		where $\tau_{p}(\bm{\Psi}) \triangleq \trshort{ \mB_{p p} \mR_x \mB_{p p}^\H }$,  $\tilde{\tau}_{p}(\bm{\Psi}) \triangleq \trshort{ \dot{\tilde{\mB}}_{p} \mR_x \dot{\tilde{\mB}}_{p}^\H }$, and $\hat{\tau}_{p}(\bm{\Psi}) \triangleq \trshort{\mB_{p p}  \mR_x \dot{\tilde{\mB}}_{p}^\H }$. 
		
		To further expose the role of the power factors in $\mathtt{CRLB}_{p}^{\theta}(\bm{\Psi})$, we rewrite $\tau_{p}(\bm{\Psi}) = \sum_{\ell=1}^{L} \sum_{q=1}^{L} \trshort{ [\mR_x]_{\ell q} \left[ \mB_{p p}^\H \mB_{p p} \right]_{q \ell} }$. From \eqref{def_Blplp} and \eqref{def_Dl}, it is observed that $\left[ \mB_{p p}^\H \mB_{p p} \right]_{q \ell} = \bm{0}_{\Nt}, \forall \ell, q \neq p$. Thus, based on \eqref{eq_Rx}, we obtain
		\begin{align*}
			\tau_{p}(\bm{\Psi}) &= \tr{ [\mR_x]_{pp} \left[ \mB_{p p}^\H \mB_{p p} \right]_{pp} } \\
			&= \bm{\xi}_{p}^\T \bm{\gamma}_{p} c_{p,1} + \norm{\bar{\bm{\eta}}_{p}}^2 c_{p,2}, \nbthis \label{def_tau_1}
		\end{align*}
		where $c_{p,1}$ and $c_{p,2}$ are defined in Theorem \ref{theo_CRB}. Similarly, we can obtain \eqref{def_tau_tilde} and \eqref{def_tau_hat}, and the proof is complete.
	
	\bibliographystyle{IEEEtran}
	\bibliography{IEEEabrv,Bibliography,CWC_ISAC}
\end{document}